\DeclareFontFamily{OT1}{pzc}{}
\DeclareFontShape{OT1}{pzc}{m}{it}{<-> s * [1.10] pzcmi7t}{}
\DeclareMathAlphabet{\mathpzc}{OT1}{pzc}{m}{it}
\def\be{\begin{equation}}
\def\ee{\end{equation}}
\def\beq{\begin{eqnarray}}
\def\eeq{\end{eqnarray}}
\theoremstyle{definition}
\theoremstyle{theorem}
\newtheorem{theorem}{Theorem}
\theoremstyle{corollary}
\begin{document}
\title{The diffusion equation is compatible with special relativity}
\author{L.~Gavassino}
\affiliation{Department of Applied Mathematics and Theoretical Physics, University of Cambridge, Wilberforce Road, Cambridge CB3 0WA, United Kingdom}

\begin{abstract}
Due to its parabolic character, the diffusion equation exhibits instantaneous spatial spreading, and becomes unstable when Lorentz-boosted. According to the conventional interpretation, these features reflect a fundamental incompatibility with special relativity. In this Letter, we show that this interpretation is incorrect by demonstrating that any smooth and sufficiently localized solution of the diffusion equation is the particle density of an exact solution of the relativistic Vlasov-Fokker-Planck equation. This establishes the existence of a causal, stable, and thermodynamically consistent relativistic kinetic theory whose hydrodynamic sector is governed \emph{exactly} by diffusion at all wavelengths. We further demonstrate that the standard arguments for instability arise from considering solutions that admit no counterpart in kinetic theory, and that apparent violations of causality disappear once signals are defined in terms of the underlying microscopic data.
\end{abstract} 
\maketitle

{\it \noindent \textbf{Introduction --}} Fick's law of diffusion \cite{Fick1855},
\vspace{-0.25cm}
\begin{equation}\label{diffusion}
\partial_t n = \mathfrak{D}\,\partial_x^2 n \, ,
\end{equation}
is among the most ubiquitous equations in physics. It describes the evolution of a conserved density \(n(t,x)\) whose microscopic carriers undergo stochastic motion characterized by a diffusivity \(\mathfrak{D}\). As a result, it provides the universal long-wavelength description of transport processes in a wide variety of systems \cite{KadanoffMartin1963}, including particle diffusion in fluids \cite{peliti_book}, heat conduction in both fluids \cite{landau6} and solids \cite{landau7}, and the probability spreading of Brownian motion \cite{Einstein1905Brownian}.

Despite its broad applicability, the diffusion equation is widely regarded in the relativistic hydrodynamics literature as a paradigmatic example of a dissipative theory incompatible with special relativity
\cite{MorseFeshbach1953,cattaneo1958,Israel_Stewart_1979,Hiscock_Insatibility_first_order,Baier2008,Muller_book,Jou_Extended,rezzolla_book,RomatschkeReview:2017ejr,GavassinoFronntiers2021,GavassinoSuperlum2021,RochaReview:2023ilf}.
In particular, its retarded Green function, defined as the solution of the inhomogeneous equation
\(\partial_t G = \mathfrak{D}\partial_x^2 G + \delta(t)\delta(x)\) with \(G(0^-,x)=0\),
\vspace{-0.1cm}
\begin{equation}
G(t,x)=\Theta(t)\frac{e^{-x^2/(4\mathfrak{D}t)}}{\sqrt{4\pi \mathfrak{D}t}},
\end{equation}
exhibits Gaussian spatial tails extending to infinity at any \(t>0\), suggesting instantaneous propagation.
Furthermore, upon Lorentz transformation, equation \eqref{diffusion} develops exponentially growing Fourier modes at arbitrarily large wavenumber
\cite{Hiscock_Insatibility_first_order,Kost2000,GavassinoSuperlum2021},
rendering the boosted initial-value problem unstable and ill-posed.

These undesired features of diffusion have prompted two distinct responses. The prevailing approach has been to interpret them as genuine pathologies, and to eliminate them by modifying Fick’s law, giving rise to a broad class of hyperbolic relativistic dissipative theories \cite{Geroch_Lindblom_1991_causal,GavassinoUniveraalityI2023odx}, including Cattaneo-type models \cite{cattaneo1958,Jou_Extended}, Israel-Stewart theory \cite{Israel_Stewart_1979,Hishcock1983,Baier2008,Denicol2012Boltzmann}, divergence-type formulations \cite{Liu1986,GerochLindblom1990}, and the BDNK framework \cite{Bemfica2019_conformal1,Kovtun2019,BemficaDNDefinitivo2020,GavassinoAntonelli:2025umq}. An alternative viewpoint maintains that the diffusion equation is not intrinsically incompatible with relativity, but instead appears problematic only when extended beyond the class of solutions warranted by its microscopic origin \cite{Kost2000,Geroch1995,LindblomRelaxation1996,GerochCriticism:2001xs}. This perspective has recently been sharpened by proposals advocating ``approximate'' Lorentz transformations of diffusive dynamics, designed to exclude the unphysical branch of solutions \cite{ArmasDensityFrame2020mpr,Basar:2024qxd,GavassinoParabolic2025hwz}.

In this Letter, we resolve this long-standing tension by demonstrating that the diffusion equation is, indeed, fully compatible with relativity when interpreted within the correct microscopic framework. We construct a fully relativistic, causal, and stable kinetic theory whose hydrodynamic sector is governed \emph{exactly} by the diffusion equation, without introducing additional gradient corrections, wavenumber cutoffs, or a finite propagation speed. This explicit realization shows that, from a physical standpoint, the apparent acausality and instability of diffusion do not reflect a fundamental inconsistency: the commonly cited pathologies arise solely from applying the diffusion equation to non-normalizable configurations that lie outside the solution space permitted by the underlying relativistic dynamics, and relativistic causality (interpreted as the physical impossibility of superluminal messaging) is preserved once we keep track of all the available microscopic information. We thus revisit the standard arguments for acausality and instability and show that they fail once diffusion is treated as a consistent and complete sector of a relativistic kinetic theory.

{\it \noindent \textbf{Notation --}} Througout the article, we work in natural units, $c=\hbar=k_B=1$.

{\it \noindent \textbf{The Vlasov-Fokker-Planck equation --}} Consider a relativistic particle of mass \(m\,{>}\, 0\) confined, for simplicity, to one spatial dimension. Denoting its momentum by \(p\), its energy is \(\varepsilon\,{=}\,\sqrt{m^{2}{+}p^{2}}\), and its velocity is \(v\,{=}\,d\varepsilon/dp\,{=}\,p/\varepsilon\). We assume that the particle is immersed in a thermal medium at inverse temperature \(\beta\), and undergoes random momentum exchanges with it. Under these conditions, the particle experiences Brownian motion in momentum space. Accordingly, an ensemble of such particles is described by a kinetic distribution function \(f(t,x,p)\) satisfying the Vlasov-Fokker-Planck (VFP) equation
\cite{Debbasch,DunkelHanggi}
\vspace{-0.25cm}
\begin{equation}\label{vlasovfokkerplanck}
(\partial_t+v\,\partial_x)f
=\frac{1}{\beta^{2}\mathfrak{D}}\,
\partial_p
\left(
\partial_p f
+\beta v f
\right),
\end{equation}
\newpage\noindent where \((\beta^{2}\mathfrak{D})^{-1}\) denotes the momentum-space diffusivity (which is assumed to be a positive constant). Because the particle velocity obeys \(|v|\leq 1\), the theory is manifestly causal. It is moreover covariantly stable and thermodynamically consistent (see e.g. \cite{GavassinoDistrubingMoving:2026klp}). A quick way to verify this is to notice that the free-energy current
\vspace{-0.1cm}
\begin{equation}
\mathcal{F}^\mu = \frac{1}{\beta}\int \frac{dp}{2\pi}
\begin{bmatrix}
1\\
v
\end{bmatrix}
f(\beta\varepsilon+\ln f-1)
\end{equation}
has nonpositive divergence:
\vspace{-0.15cm}
\begin{equation}
\partial_\mu \mathcal{F}^\mu
= -\frac{1}{\beta^{3}\mathfrak{D}}
\int \frac{dp}{2\pi f}\,
\left(
\partial_p f+\beta v f
\right)^{2}
\leq 0\, ,
\end{equation}
in agreement with the second law of thermodynamics \cite{Groot1980RelativisticKT,huang_book,cercignani_book}. 
This, combined with the convexity of \(\mathcal{F}^0\), immediately establishes Lyapunov stability of the equilibrium state in any inertial frame \cite{GavassinoGibbs2021,GavassinoCausality2021}.

{\it \noindent \textbf{Main Theorem --}} We can now state our main result, which we express in the form of a theorem.
\vspace{-0.1cm}
\begin{theorem}\label{theo1}
Let \(g(k)\) be a Schwartz function satisfying \(g^{*}(k)=g(-k)\), and define (for $t\geq 0$)
\begin{equation}\label{magicsolutions}
f(t,x,p)
=
e^{\alpha-\beta\varepsilon}
+
\int_{\mathbb{R}} \frac{dk}{2\pi}\,
g(k)\,
e^{-\beta\varepsilon - i\beta \mathfrak{D} k p + i k x - \mathfrak{D} k^{2} t},
\end{equation}
with \(\alpha\in\mathbb{R}\).
Then, the following statements hold:
\begin{itemize}
  \setlength{\itemsep}{2pt}
  \setlength{\parskip}{0pt}
  \setlength{\parsep}{0pt}
\item[\textup{\bf(a)}]
\(f(t,x,p)\) is a smooth, real-valued function, well defined for all \(t\ge0\), and with finite energy-momentum moments.

\item[\textup{\bf(b)}]
\(f(t,x,p)\) is nonnegative for all \(t\ge0\), provided that
\(
\|g\|_{L^{1}}\le 2\pi e^{\alpha}.
\)

\item[\textup{\bf(c)}]
\(f(t,x,p)\) is an exact solution of the relativistic VFP equation~\eqref{vlasovfokkerplanck}.

\item[\textup{\bf(d)}]
The associated particle density
\(
n(t,x)=\int \frac{dp}{2\pi}\,f(t,x,p)
\)
satisfies the diffusion equation~\eqref{diffusion}, with initial condition
\vspace{-0.1cm}
\begin{equation}\label{densitykernelrelation}
n(0,x)
=
\frac{e^{\alpha} m K_{1}(m\beta)}{\pi}
+
\int_{\mathbb{R}} \frac{dk}{2\pi}\,
g(k)\,
\frac{m\,K_{1}\!\left(m\beta\sqrt{1+\mathfrak{D}^{2}k^{2}}\right)}
{\pi\sqrt{1+\mathfrak{D}^{2}k^{2}}}
\,e^{ikx}.
\end{equation}
\end{itemize}
\end{theorem}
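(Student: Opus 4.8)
The plan is to verify the four claims essentially by direct computation, exploiting the fact that the integrand in \eqref{magicsolutions} is an explicit exponential whose action under the VFP operator can be read off algebraically. The key structural observation is that each Fourier mode
\[
\phi_k(t,x,p) = e^{-\beta\varepsilon - i\beta\mathfrak{D}kp + ikx - \mathfrak{D}k^2 t}
\]
is built from the equilibrium weight $e^{-\beta\varepsilon}$ multiplied by a plane-wave factor, so the whole argument reduces to understanding how the free-streaming part $(\partial_t + v\,\partial_x)$ and the collision part act on this ansatz.

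First I would establish \textbf{(a)}. Smoothness and real-valuedness are immediate: the reality condition $g^*(k)=g(-k)$ together with the reality of $\varepsilon(p)$ makes the $k$ and $-k$ contributions complex-conjugate, so the integral is real; smoothness follows because $g$ is Schwartz and the exponential factor $e^{-\mathfrak{D}k^2 t}$ provides Gaussian decay in $k$ for $t>0$ (and $g$ itself secures convergence at $t=0$), allowing differentiation under the integral sign to all orders in $(t,x,p)$. Finiteness of the energy-momentum moments reduces to showing that $\int dp\,\varepsilon^N e^{-\beta\varepsilon}$ and its $p$-weighted analogues are finite, which holds because $e^{-\beta\varepsilon}$ decays like $e^{-\beta|p|}$ at large $|p|$; the $k$-integral and $p$-integral can be exchanged by Fubini using the Schwartz decay of $g$.

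Next, \textbf{(c)} is the computational heart and I expect it to be routine once set up correctly. I would act with the VFP operator on a single mode $\phi_k$. The left side gives $(\partial_t + v\,\partial_x)\phi_k = (-\mathfrak{D}k^2 + ikv)\phi_k$. For the right side, I note that $\partial_p e^{-\beta\varepsilon} = -\beta v\, e^{-\beta\varepsilon}$, so $\partial_p\phi_k + \beta v\,\phi_k = (-i\beta\mathfrak{D}k)\phi_k$; applying $\partial_p$ once more and using $\partial_p v = (1-v^2)/\varepsilon\cdot\ldots$ — more directly, since the bracket is just a constant multiple of $\phi_k$, one finds $\tfrac{1}{\beta^2\mathfrak{D}}\partial_p\big[(-i\beta\mathfrak{D}k)\phi_k\big] = \tfrac{-ik}{\beta}\partial_p\phi_k = \tfrac{-ik}{\beta}(-\beta v - i\beta\mathfrak{D}k)\phi_k = (ikv - \mathfrak{D}k^2)\phi_k$. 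This matches the left side exactly, so each mode solves \eqref{vlasovfokkerplanck}; linearity and the fact that the constant $e^{\alpha-\beta\varepsilon}$ is the equilibrium solution complete the proof. The main subtlety — the only place care is genuinely needed — is justifying that the VFP operator commutes with $\int dk$, which again follows from dominated convergence given the Schwartz/Gaussian decay.

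Finally, \textbf{(d)} follows by integrating \eqref{magicsolutions} over $p$. Since $\int \tfrac{dp}{2\pi}\,\phi_k$ factors the $t,x$-dependence out as $e^{ikx - \mathfrak{D}k^2 t}$, and the remaining $p$-integral $\int \tfrac{dp}{2\pi} e^{-\beta\varepsilon - i\beta\mathfrak{D}kp}$ is a standard Bessel integral evaluating to $\tfrac{m K_1(m\beta\sqrt{1+\mathfrak{D}^2k^2})}{\pi\sqrt{1+\mathfrak{D}^2k^2}}$ (via the integral representation $\int dp\, e^{-a\sqrt{m^2+p^2} - ibp} \propto K_1$), one reads off both that $n(t,x) = n_{\mathrm{eq}} + \int\tfrac{dk}{2\pi}\,\tilde g(k)\,e^{ikx-\mathfrak{D}k^2 t}$ for an appropriately rescaled amplitude, and hence that $\partial_t n = -\mathfrak{D}k^2$-weighting $=\mathfrak{D}\partial_x^2 n$. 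Setting $t=0$ yields \eqref{densitykernelrelation}. For \textbf{(b)}, I would bound $f \ge e^{\alpha-\beta\varepsilon} - e^{-\beta\varepsilon}\int\tfrac{dk}{2\pi}|g(k)| = e^{-\beta\varepsilon}\big(e^\alpha - \tfrac{1}{2\pi}\|g\|_{L^1}\big)$, where I have used $|\phi_k/e^{-\beta\varepsilon}| = e^{-\mathfrak{D}k^2 t}\le 1$; nonnegativity is then exactly the stated condition $\|g\|_{L^1}\le 2\pi e^\alpha$. The only genuine obstacle in the whole theorem is bookkeeping the Bessel-function identity in \textbf{(d)} and confirming the precise prefactors, but this is a tabulated integral rather than a conceptual difficulty.
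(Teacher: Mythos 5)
Your proposal is correct and follows essentially the same route as the paper's proof: reality from the conjugation symmetry of $g$, the triangle-inequality bound for nonnegativity, direct mode-by-mode substitution into the VFP equation, and the Bessel-function evaluation of the momentum integral for the density. You simply spell out the algebra that the paper compresses into ``follow by direct substitution,'' and all of your intermediate identities check out.
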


\begin{proof}
\textup{\bf(a)}
For \(t\ge0\), the integrand in equation \eqref{magicsolutions} is a Schwartz function of \(k\), ensuring absolute convergence of the integral. Differentiation with respect to \(t\), \(x\), or \(p\) introduces at most polynomial factors in \(k\), which preserve convergence and smoothness. Reality of \(f\) follows directly from the condition \(g^{*}(k)=g(-k)\). The finiteness of all moments $\int \frac{dp}{2\pi} \varepsilon^a p^b f(t,x,p)$ (with $a,b\geq 0$) follows from the fact that $g(k)e^{-\beta \varepsilon}$ is a Schwartz function of $(k,p)$.

\noindent\textup{\bf(b)}
Using the triangle inequality, one finds
\vspace{-0.1cm}
\begin{equation}
\left|
\int_{\mathbb{R}} \frac{dk}{2\pi}\,
g(k)\,
e^{-\beta\varepsilon - i\beta \mathfrak{D} k p + i k x - \mathfrak{D} k^{2} t}
\right|
\le
\frac{e^{-\beta\varepsilon}}{2\pi}\,\|g\|_{L^{1}} .
\end{equation}
The stated bound on \(\|g\|_{L^{1}}\) therefore guarantees \(f(t,x,p)\ge0\).

\noindent\textup{\bf(c,d)} The last two statements follow by direct substitution. All derivatives may be interchanged with the \(k\)-integral, and the integrand is a Schwartz function also in \(p\), allowing the exchange of \(p\)- and \(k\)-integrations.
\end{proof}

The theorem establishes that any sufficiently regular solution of the diffusion
equation \(n(t,x)=n_{0}+\delta n(t,x)\) (defined for \(t\ge0\)), with \(n_{0}\) a constant background and \(\delta n\) a localized perturbation, admits an
\emph{exact} embedding into a corresponding VFP solution \(f(t,x,p)\) of the form~\eqref{magicsolutions}.
This embedding is realized by choosing
\begin{equation}\label{geometro}
g(k)=\delta n(k)\,
\frac{\pi\sqrt{1+\mathfrak{D}^{2}k^{2}}}
{m\,K_{1}\!\left(m\beta\sqrt{1+\mathfrak{D}^{2}k^{2}}\right)}\, ,
\end{equation}
where \(\delta n(k)\) denotes the Fourier transform of the initial data
\(\delta n(0,x)\). Provided that \(g(k)\) is sufficiently regular and the
perturbation \(\delta n\) sufficiently small, the resulting distribution
\(f(t,x,p)\) defines a physically admissible kinetic solution whose associated
particle density coincides identically with \(n(t,x)\). In this
sense, Fickian diffusion is an exact hydrodynamic sector of the
relativistic kinetic theory, rather than an approximation or a long-wavelength
limit.

We now clarify which initial data admit a well-defined embedding. If the initial perturbation
\(\delta n(0,x)\) belongs to the Schwartz class, then its Fourier transform \(\delta n(k)\) is also
Schwartz, implying that the corresponding \(g(k)\) is smooth. However, \(g(k)\) may not itself be
Schwartz, since the embedding kernel in \eqref{geometro} grows exponentially as
\(e^{m\beta|\mathfrak{D}k|}\) at large \(|k|\). Nevertheless, for any strictly positive time \(t>0\),
the integral representation~\eqref{magicsolutions} acquires the Gaussian factor
\(e^{-\mathfrak{D}k^{2}t}\), which dominates this exponential growth and ensures convergence. Thus,
the embedding is unambiguous for all \(t\,{>}\,0\) whenever \(\delta n(0,x)\) is Schwartz (while the
instant \(t\,{=}\,0\) may require interpretation in a distributional sense). A sufficient condition for the embedding to define a bona fide kinetic distribution already at
\(t\,{=}\,0\) is that
\(
\delta n(k)\,e^{(m\beta\mathfrak{D}+a) |k|}\in L^{1}(\mathbb{R})
\) (for some arbitrary $a>0$),
for instance if \(\delta n(0,x)\) extends analytically to the strip
\(|\mathfrak{Im}\, x| < m\beta \mathfrak{D}+a\).

Let us finally note that, although Theorem~\ref{theo1} assumes $g(k)$ to be Schwartz for simplicity, the
construction only relies on convergence of the $k$ integral for $t>0$, and therefore
extends to substantially less regular data, suggesting that a broader class of
initial profiles $\delta n(0,x)$ (e.g. in $L^p$ or $H^s$) might admit an embedding, in a distributional sense.

\newpage

{\it \noindent \textbf{Why stability is preserved --}}
The standard argument for the instability of the diffusion equation proceeds as follows \cite{Kost2000,GavassinoLyapunov_2020,GavassinoSuperlum2021}. One considers a density perturbation of the form
\(\delta n = e^{\Gamma \tilde t}\), where \(\tilde t=\gamma(t-Vx)\) is the time coordinate of an inertial observer moving with velocity \(V\) relative to the medium, and \(\gamma=(1-V^{2})^{-1/2}\). Substituting this ansatz into \eqref{diffusion} yields \(\Gamma=(\mathfrak{D}\gamma V^{2})^{-1}>0\), suggesting that a boosted observer would detect exponential growth of spatially homogeneous perturbations (signaling an instability \cite{Hiscock_Insatibility_first_order}).

We now show that such pathological solutions do not admit an embedding into relativistic kinetic theory. Indeed, configurations proportional to \(e^{\Gamma\gamma(t-Vx)}\) are not spatially localized and therefore fall outside the class of solutions covered by our theorem. More strongly, one can show that any solution of the relativistic VFP equation of the form
\(
f(t,x,p)=e^{\alpha-\beta\varepsilon}+\delta f(p)\,e^{\Gamma\gamma(t-Vx)}
\)
necessarily satisfies \(\Gamma\le0\), and is thus stable. To see this, note that for this spacetime dependence,  equation \eqref{vlasovfokkerplanck} reduces to
\vspace{-0.2cm}
\begin{equation}
\Gamma(1-vV)\,\delta f
=
\frac{1}{\beta^{2}\mathfrak{D}\gamma}\,
\partial_p
\bigl(
\partial_p\delta f+\beta v\,\delta f
\bigr).
\end{equation}
Multiplying both sides by \(e^{\beta\varepsilon}(\delta f)^{*}\), integrating over momentum, and solving for \(\Gamma\), one finds
\vspace{-0.1cm}
\begin{equation}
\Gamma
=
-\frac{\displaystyle
\int \frac{dp}{2\pi}\,e^{\beta\varepsilon}
\bigl|
\partial_p\delta f+\beta v\,\delta f
\bigr|^{2}}
{\displaystyle
\beta^{2}\mathfrak{D}\gamma
\int \frac{dp}{2\pi}(1-vV)\,e^{\beta\varepsilon}|\delta f|^{2}}
\le0,
\end{equation}
where we used the fact that \(1-vV\geq 1-|V|>0\), and we assumed that $\delta f$ decays to 0 at large $p$.

In summary, the unstable modes of the diffusion equation lie outside the space of solutions that can be embedded into a relativistic kinetic theory. Their existence therefore does not signal a failure of diffusion as a relativistic model. Rather, it reflects the necessity (peculiar to relativity) of restricting Fick’s law to spatially localized initial data. Within this physically admissible solution space, diffusion is again stable \cite{GavassinoLorentzBoostedDiffusion:2026fff}.\footnote{\label{footonafootona} In a companion paper \cite{GavassinoFokkerQuasi:2026zsz}, we compute the full quasi-normal-mode spectrum of the VFP equation. We find that in both the Newtonian and ultrarelativistic regimes the hydrodynamic mode has the diffusive form \(\omega=-i\mathfrak{D}k^{2}\), with no extra corrections. However, while in the Newtonian case this mode exists for all \(k\in\mathbb{C}\), in the ultrarelativistic case it is confined to the strip \(|\mathfrak{Im}\,k|<(2\mathfrak{D})^{-1}\), ensuring that the covariant stability bound \(\mathfrak{Im}\,\omega\le|\mathfrak{Im}\,k|\) \cite{HellerBounds2022ejw,GavassinoBounds2023myj} is respected.}

{\it \noindent \textbf{Why microscopic causality is preserved --}}
The claim that the diffusion equation is acausal stems from the
observation that an initially compactly supported density perturbation
\(\delta n(t,x)\) develops unbounded spatial support for \(t>0\), thereby suggesting that the propagation is instantaneous \cite{MorseFeshbach1953,rauch2012partial}. We now show that, when such solutions are
embedded into the relativistic VFP framework, their
evolution is, in fact, perfectly causal, from a physical standpoint.

All perturbations of the form~\eqref{magicsolutions} can be written as
\(\delta f(t,x,p)=e^{-\beta\varepsilon}\,F(t,x-\beta\mathfrak{D}p)\).
It follows that, if there exists a phase-space point \((x_{0},p_{0})\) such that
\(\delta f(0,x_{0},p_{0})\propto F(0,x_{0}-\beta\mathfrak{D}p_{0})\neq0\),
then for any other spatial location \(x_{1}\) one can find a momentum
\(p_{1}=p_{0}+(x_{1}-x_{0})/(\beta\mathfrak{D})\) for which
\(\delta f(0,x_{1},p_{1})\propto F(0,x_{0}-\beta\mathfrak{D}p_{0})\neq0\).
As a consequence, no kinetic distribution of the form~\eqref{magicsolutions}
can possess compact spatial support at the initial time\footnote{This implies that commonly assumed preparation protocols (such as generating a spatially localized perturbation on top of an otherwise equilibrium microscopic state) do not, in general, lie within the diffusive sector of the relativistic kinetic theory considered here.}. This observation clarifies the origin of the apparent superluminal behavior.
While the macroscopic density perturbation \(\delta n(t,x)\) may exhibit
instantaneous spatial tails, the initial phase-space perturbation
\(\delta f(0,x,p)\) is already nonzero outside the support of \(\delta n(0,x)\).
No information is therefore transmitted faster than light. The nonlocal tails
of \(\delta n\) arise from a redistribution between positive and negative
contributions of \(\delta f\), fully determined by the initial microscopic data
within the past light cone (see \cite{GavassinoDisperisons2023mad} for a similar argument).

In summary, diffusion appears acausal only if ``signals'' are defined in terms of the macroscopic density \(\delta n\) alone. Once the complete microscopic
information encoded in the kinetic distribution \(\delta f\) is taken into
account, the evolution is manifestly causal, and the apparent superluminal
tails of \(\delta n\) convey no new information to an observer far away.

{\it \noindent \textbf{The massless limit --}} It is instructive to consider a case in which the kinetic distribution can be
obtained explicitly. In the ultrarelativistic limit \(m\rightarrow 0\), the embedding
kernel in equation \eqref{geometro} reduces to the polynomial
\(\pi\beta(1+\mathfrak{D}^{2}k^{2})\), allowing the integral
representation~\eqref{magicsolutions} to be evaluated in closed form:
\vspace{-0.1cm}
\begin{equation}\label{fmassles}
f(t,x,p)
=
e^{\alpha-\beta|p|}
+\pi \beta\, e^{-\beta|p|}
\bigl(1-\mathfrak{D}^{2}\partial_{x}^{2}\bigr)\,
\delta n\!\left(t,x-\beta\mathfrak{D}p\right).
\end{equation}
One readily verifies that, for any solution \(\delta n(t,x)\) of the diffusion
equation (not necessarily localized), the distribution function
\eqref{fmassles} indeed satisfies the ultrarelativistic VFP equation. By contrast,
showing that the associated particle density
\(
n(t,x)=\int \frac{dp}{2\pi}\,f(t,x,p)
\)
coincides with \(n_{0}+\delta n(t,x)\) requires an integration by parts in
momentum space:
\vspace{-0.1cm}
\begin{equation}\label{proofofDensity}
\begin{split}
n(t,x)
={}&
n_{0}
+\int \frac{dp}{2}\,\beta\,e^{-\beta|p|}
\bigl(1-\beta^{-2}\partial_{p}^{2}\bigr)\,
\delta n\!\left(t,x-\beta\mathfrak{D}p\right)\\
={}&
n_{0}
+\int \frac{dp}{2}\,\beta\,
\bigl(1-\beta^{-2}\partial_{p}^{2}\bigr)e^{-\beta|p|}
\,\delta n\!\left(t,x-\beta\mathfrak{D}p\right)
=n_{0}
+\int dp\,\delta(p) 
\,\delta n\!\left(t,x-\beta\mathfrak{D}p\right)
=
n_{0}+\delta n(t,x).
\end{split}
\end{equation}
\newpage\noindent This step is justified only within the subset of diffusive solutions for which
\(\delta n\) does not grow too rapidly at spatial infinity. As a counterexample,
consider the unstable modes \(\delta n=e^{(t-Vx)/(\mathfrak{D}V^{2})}\) discussed
earlier. In this case, the integrand in \eqref{proofofDensity} behaves as
\(e^{-\beta|p|+\beta p/V}\), which diverges exponentially in one momentum
direction, resulting in an infinite particle density. These unstable modes
therefore lie outside the admissible solution space selected by the kinetic
theory.

The explicit form~\eqref{fmassles} also clarifies the mechanism by which causality
is preserved. Suppose that the initial density perturbation \(\delta n(0,x)\)
is supported within the interval \(-R\le x\le R\). For a kinetic distribution
of the form~\eqref{fmassles}, an observer located far away, at \(x=L\gg R\), can
nevertheless register a weak precursor in the momentum distribution at his/her location, appearing
as a localized ``bump'' confined to the range
\((L-R)/(\beta\mathfrak{D})\le p\le (L+R)/(\beta\mathfrak{D})\). This
momentum-space feature already encodes the complete spatial profile of
\(\delta n\), since the operator \(1-\mathfrak{D}^{2}\partial_{x}^{2}\) is
invertible. It follows that no new information is conveyed when the apparently
superluminal density signal subsequently reaches the observer: all relevant
information was already contained in the initial kinetic data.

{\it\noindent\textbf{Generalization to higher dimensions --}}
We emphasize that our restriction to one spatial dimension was adopted solely for
convenience.
The construction extends straightforwardly to $d$ spatial dimensions upon replacing
$kx$ and $kp$ by the inner products $\mathbf k\!\cdot\!\mathbf x$ and
$\mathbf k\!\cdot\!\mathbf p$, and promoting the operator
$\partial_p(\partial_p+\beta v)$ to $\nabla_{\mathbf p}\!\cdot(\nabla_{\mathbf p}+\beta\mathbf v)$.
With these substitutions, the proof of Theorem~\ref{theo1} is unchanged.
The only modification concerns the explicit form of the density kernel appearing in
equations \eqref{densitykernelrelation} and \eqref{geometro}, which acquires a
dimension-dependent structure through the corresponding $d$-dimensional momentum
integral.

{\it \noindent \textbf{Physical interpretation and broader consequences --}} The fact that the solutions \eqref{magicsolutions} solve the diffusion equation at arbitrarily large $k$ does \textit{not} imply that diffusion is the dominant dissipation mechanism of VFP in all regimes. In fact, when gradients are large, the states \eqref{magicsolutions} are not generic, as they are explicitly fine-tuned to remove all non-hydrodynamic contributions, which are shown in a companion paper to have ballistic character \cite{GavassinoFokkerQuasi:2026zsz}. Only in the hydrodynamic regime (i.e. when gradients are small compared to $1/\mathfrak{D}$) are generic solutions well approximated by \eqref{magicsolutions}, and diffusion becomes the dominant sector (and thus the appropriate effective description) of the system.

The main point of this work, however, is not that diffusion is the only relevant transport process in relativistic VFP, but that the diffusive mode $\omega = -i\mathfrak{D}k^2$ \textit{exists as an exact excitation} even at short wavelengths. In all other known relativistic theories, this mode ceases to exist beyond some critical value of $k \in \mathbb{R}$, either because it collides with other excitations \cite{Bajec:2024jez,Brants2025SavingCausality} or because its character changes qualitatively due to higher-order corrections. Indeed, it was conjectured in \cite{Krotscheck1978} that $\omega/k$ always remains finite at large $k$ in relativistic causal theories, a claim that has had repercussions across the relativistic hydrodynamics literature \cite{Pu2010,RomatschkeReview:2017ejr,Kovtun2019}. More recently, it was argued \cite{HellerBounds2022ejw,HellerHydrohedron2023jtd} that the dispersion relation $\omega = -i\mathfrak{D}k^2$ cannot arise from causal microscopic theories, based on the assumption that, if it exists for all real $k$, it should also exist (by analytic continuation) at complex $k$, leading to instabilities under Lorentz boosts. In VFP, however, the diffusive mode ceases to exist as a genuine excitation before entering unstable regions of the complex $k$-plane (see footnote \ref{footonafootona}). Since even a single consistent counterexample suffices to invalidate a general prohibition, diffusion is not intrinsically incompatible with relativity. It can arise as a (possibility exact) sector of a microscopic theory, within which all pathological solutions are physically forbidden.

{\it \noindent \textbf{Discussion and Conclusions --}} We provided a rigorous counterexample to the widely held belief that the diffusion
equation (and, more generally, parabolic theories of dissipation) cannot consistently arise in a fully relativistic context. We have
demonstrated that a fully relativistic, causal, and stable microscopic theory\footnote{It is worth emphasizing that, while the Vlasov-Fokker-Planck equation is a perfectly acceptable relativistic kinetic theory when regarded as a stand-alone equation, it is also a strictly Markovian dynamical model. It has been suggested that such exact Markovianity may itself be an idealization in a relativistic
setting, in the sense that deriving this equation from an underlying Langevin description
can require a degree of temporal or spatial non-locality
\cite{Petrosyan:2021lqi,Zaccone2023}. On the other hand, the Langevin equation itself is subject to intrinsic limitations in
relativistic contexts. Since a relativistic Hamiltonian description of interacting classical
particles cannot exist \cite{CurrieJordanSudarshan1963NoInteraction,CannonJordan1969NoInteraction}, locality arguments based on such equations are not fundamental, and may be intrinsically unreliable.
}
can give rise to the diffusion equation~\eqref{diffusion} as an \emph{exact}
hydrodynamic sector, without the introduction of ultraviolet cutoffs,
higher-order gradient corrections, or finite relaxation times of the Cattaneo
type. The standard claims of instability are revealed to be
artifacts of extending diffusion beyond its physically admissible solution
space. When the analysis is restricted to smooth, spatially localized configurations in the rest frame of the medium, no instabilities arise, and the principle of causality (i.e. the impossibility of superluminal information transfer) is preserved once the full microscopic information carried by the kinetic
distribution is taken into account\footnote{Of course, the diffusion equation still defines an acausal (in the sense of \citet[\S~10.1]{Wald}) initial-value problem when regarded as a stand-alone partial differential equation. In this sense, diffusion is still parabolic. Nevertheless, it becomes compatible with the broader principle of relativistic causality once it is embedded into a larger microscopic theory with additional degrees of freedom, which redefine the notion of a physical signal. A closely analogous situation arises for the relativistic Schr\"odinger equation
\(
i\partial_t \psi=\sqrt{m^2-\partial_x^2}\,\psi,
\)
which by itself does not define a causal initial-value problem, yet corresponds to the positive-frequency sector of the Klein--Gordon equation
\(
-\partial_t^2 \psi=(m^2-\partial_x^2)\psi,
\)
which is a perfectly causal relativistic field theory \cite{GavassinoDisperisons2023mad}.}. All results presented here are derived within a linear model with constant coefficients; extensions to nonlinear dynamics are not addressed in this work.

The above results prompt a reassessment of the role played by relativity in
dissipative hydrodynamics. Our findings do not undermine the extensive body of
work devoted to constructing hyperbolic relativistic theories of dissipation,
such as Cattaneo, Israel-Stewart, BDNK, or divergence-type formulations \cite{Geroch_Lindblom_1991_causal,GavassinoUniveraalityI2023odx,cattaneo1958,Israel_Stewart_1979,Hishcock1983,Baier2008,Denicol2012Boltzmann,Liu1986,GerochLindblom1990,Bemfica2019_conformal1,Kovtun2019,BemficaDNDefinitivo2020,GavassinoAntonelli:2025umq}. In
practice, when one considers boosted frames, rotating fluids, or systems
without a globally comoving frame, the exact kinetic description leads to
effective equations that are highly nonlocal and difficult to handle
numerically. In such situations, hyperbolic hydrodynamic models remain
indispensable as controlled approximations that ensure manifest Lorentz
covariance and well-posedness at the level of partial differential equations.
Moreover, for many physical systems, Israel-Stewart-type theories provide a
quantitatively superior description of transient dynamics beyond the diffusive
regime \cite{Jou_Extended,Denicol_Relaxation_2011,Grozdanov2019,BAGGIOLI20201,BulkGavassino,GavassinoFarFromBulk:2023xkt,AhnBaggioli:2025odk,GavassinoPlasma:2025tul}.

At the same time, our analysis challenges a deeply rooted intuition: that
relativistic effects must necessarily manifest themselves at the hydrodynamic
level through a finite signal speed \cite{cattaneo1958,Israel_Stewart_1979}. The present results show that this
expectation is too restrictive. Relativity constrains the microscopic dynamics
and the admissible space of states, but it does not require the emergent
hydrodynamic equations themselves to be hyperbolic. In this sense, Cattaneo’s equation is not intrinsically more ``relativistic'' than Fick’s law; it is just more robust as a standalone macroscopic theory, remaining well-behaved over a broader class of states and frames. In other words, the true imprint of relativity lies
not in the form of the macroscopic equations alone, but in the structure of the
microscopic theory and in the restrictions it imposes on physically realizable
hydrodynamic solutions.

More broadly, our work highlights that apparent violations of causality or
stability in effective theories often signal a mismatch between the equation
and its domain of applicability, rather than a fundamental inconsistency. By
identifying diffusion as an exact, relativistically admissible sector of
kinetic theory, we provide a concrete example in which parabolic dynamics
coexist fully consistently with relativistic causality. We expect that similar
considerations may apply to other dissipative processes, and that a careful
separation between equations of motion and admissible solution spaces will
prove essential in future developments of relativistic nonequilibrium physics.

\vspace{-0.2cm}
\section*{Acknowledgements}
\vspace{-0.2cm}

I thank M.M. Disconzi, J. Noronha, J.F. Paquet, and A. Zaccone for reading the manuscript and providing useful feedback.
This work is supported by a MERAC Foundation prize grant,  an Isaac Newton Trust Grant, and funding from the Cambridge Centre for Theoretical Cosmology.

\bibliography{Biblio}

\label{lastpage}
\end{document}